\newtheorem{theorem}{\textbf{Theorem}}
\newtheorem{corollary}{\textbf{Corollary}}
\newtheorem{lemma}{\textbf{Lemma}}
\newtheorem{remark}{\textbf{Remark}}
\begin{document}

% paper title
\title{An Importance Aware Weighted Coding Theorem Using Message Importance Measure\thanks{Zheqi Zhu, Shanyun Liu, Rui She, Shuo Wan, and Pingyi Fan are with the Department of Electronic Engineering, Tsinghua University, Beijing 100084, China (e-mail: fpy@tsinghua.edu.cn).
Khaled B. Letaief is with the Department of ECE, Hong Kong University of Science and Technology, Hong Kong.}}

\author{{Zheqi Zhu, Shanyun Liu, Rui She,  Shuo Wan, Pingyi Fan,~\IEEEmembership{Senior Member,~IEEE}, Khaled B. Letaief,~\IEEEmembership{Fellow,~IEEE}}}

% \IEEEauthorblockA{\IEEEauthorrefmark{1}
% Beijing National Research Center for Information Science and Technology and Department of Electronic Engineering Tsinghua University, Beijing 100084, China,
% }}
% \author{\IEEEauthorblockN{Author 1, Author 2, Author 3, Author 4, Author 5, Author 6}}

% make the title area
\maketitle

% \thispagestyle{empty} % no page number for the first page
% \pagestyle{empty}  % no page number for the second and the later pages

% As a general rule, do not put math, special symbols or citations
% in the abstract or keywords.
\begin{abstract}
There are numerous scenarios in source coding where not only the code length but the importance of each value should also be taken into account. Different from the traditional coding theorems, by adding the importance weights for the length of the codes, we define the average cost of the weighted codeword length as an importance-aware measure of the codes. This novel information theoretical measure generalizes the average codeword length by assigning importance weights for each symbol according to users' concerns through focusing on user's selections. With such definitions, coding theorems of the bounds are derived and the outcomes are shown to be extensions of traditional coding theorems.
\end{abstract}

% Note that keywords are not normally used for peerreview papers.
\begin{IEEEkeywords}
coding theorem, importance aware coding, information theory, message importance measure.
\end{IEEEkeywords}

\section{Introduction}
As one of the most fundamental theoretical basis of communication, source coding theorems aim at minimizing the average expected codeword length of the source symbols under the constraints of decipherable coding schemes. However, while taking users' preference into consideration, each source symbol may also vary in terms of importance. Thus, from the perspective of information theory, an importance aware measure for codes is necessary.

Let $X\in\mathcal X$ denote the finite discrete source to be encoded, where $\mathcal X=\{x_1,x_2,\cdots,x_N\}$ is the set of all available values of the source. $p_i$ and $l_i$ are the probability and the codeword length of $x_i$ respectively. Shannon's coding theorem~\cite{shannon1948mathematical} showed that the expected length $\bar{L}=\sum\limits_{i=1}^Np_il_i$ is close to $X$'s entropy, $H(X)=-\sum p_i\log p_i$. The theorem can be derived by solving an optimization problem to minimize $\bar{L}$ subject to the constraints of the Kraft's ineaquality~\cite{kraft1949device}: 
\begin{equation}\label{kraft}
    \sum\limits_{i=1}^ND^{-l_i}\leq1
\end{equation}
where $D$ is the size of the code alphabet~\cite{cover2012elements}. 

Based on the properties of information measure, Belis \textit{et al}.~\cite{belis1968quantitative} defined the function
\begin{equation}\label{Hu}
    H(U,X)=-\mathcal K\sum\limits_{i=1}^N u_ip_i\log p_i
\end{equation}
to measure the expected average of information utility, where $u_i$ is the utility of each event $x_i$ and $\mathcal K$ is the normalization factor. Inspired by this measure,~\cite{bhat2016noiseless} and~\cite{taneja1985coding} proposed the quantity $
    L_u=\frac{\sum u_ip_il_i}{\sum u_ip_i}$ 
as the useful expected codeword length and derived the R\'enyi-entropy related bounds under a so-called generalized Kraft's ineaquality
\begin{equation}\label{gen kraft}
    \sum\limits_{i=1}^Nu_iD^{-l_i}\leq\sum\limits_{i=1}^Nu_ip_i.
\end{equation} Considering the 1-order results, they obtained the generalized useful information source coding (UISC) bounds:
\begin{equation}\label{u coding}
    \frac{-\sum u_ip_i\log_Dp_i}{\sum u_ip_i}\leq L_u<\frac{-\sum u_ip_i\log_Dp_i}{\sum u_ip_i}+1.
\end{equation}
%However, the counterexamples in the following section will show that such an ineaquality does not always hold for uniquely decodable prefix codes which means that it cannot be a constraint in such cases.

Motivated by above ideas, we shall consider the coding scenarios where symbols have various weights which can reflect the impact of human's opinion on each value. Thus, the length should also be weighted. Specifically, message importance measure (MIM) introduced in~\cite{fan2016message} gives an exponential-form measure to quantify the importance of the event from a probabilistic viewpoint. MIM has been used in several fields such as information compression, distribution estimation, anomaly detection, recommendation systems and IoTs~\cite{liu2018non,she2019importance,liu2019matching}. The parameter in MIM can be selected to amplify the weights of certain elements according to users' preference~\cite{she2017focusing}. Thus, by assigning MIM weightings to code lengths, an importance-aware codeword measure with explicit forms can be defined.
% we define the importance-aware code length. Also, the coding theorem and several properties will be discussed.

The main contributions of this work can be summarized as follows: (1) We propose an importance-aware weighted expected code length and formulate the optimization problem, as well as derive lower/upper bounds for single sources and sequence sources respectively, which cover Shannon's coding theorem as a special case; 
(2) The MIM-form weighted coding theorems and their properties are discussed, which results in a focusing effect to the events with certain probability; (3) Numeric simulations and comparisons with relative coding theorems are presented to investigate the characteristics of the theorems. The results will show better convergence, length-stability and preference reflection of MIM based I-W coding.

The rest of this letter is organized as follows. In Section \uppercase\expandafter{\romannumeral2} we will define the importance-aware code length and derive its coding theorems for two types of sources. In Section \uppercase\expandafter{\romannumeral3}, we discuss the theorems under specific weightings, namely, the MIM-weighted measure. Then, in Section \uppercase\expandafter{\romannumeral4}, some simulation results will be shown to demonstrate the properties and the advantages of such information theoretical measure and its coding theorems. Finally, in Section \uppercase\expandafter{\romannumeral5} we conclude this work and give several possible research directions. 
%In addition, some basic notations used in the following sections are listed in Tabel~\ref{notation}.
% \begin{table}\label{notation}
%     \centering
%     \begin{tabular}{ccc}
%         \hline
%         Notation & Description\\
%         \hline
%         $X$ & The source random variable to be encoded\\
%         $\mathcal X$ & The set of possible values of $X$\\
%         $p_i$ & The probability of value $x_i\in\mathcal X$, i.e $P(x_i)$\\
%         $l_i$ & The codeword length of $x_i$\\
%         $D$ & The size of encoding alphabet\\
%         \hline
%     \end{tabular}
%     \caption{The description of notations.}
% \end{table}

\section{An Importance-aware Weighted Measure and Its Coding Theorem}
Motivated by Eq.(\ref{Hu}), we take importance weight of each symbol in $\mathcal X$ into consideration and define the quantity
\begin{equation}
    \label{Lw}
    \bar L_w:=\sum\limits_{i=1}^Np_iw_il_i
\end{equation}
as the \textbf{i}mportance-aware \textbf{w}eighted (I-W) expected code length, where $w_i$ is the importance weight assigned for $x_i$. 
$\bar{L}_w$ can be regarded as a cost of certain encoding schemes and the corresponding coding theorems can be derived by minimizing the cost under the encoding constraints. However, the generalization of the Kraft's inequality, Eq.(\ref{gen kraft}) under the weighted length definitions proposed in~\cite{bhat2016noiseless} and~\cite{taneja1985coding} does not always hold for uniquely decodable prefix codes. Assume that we have a Bernoulli source $X$ which has two possible taking values $\mathcal X=\{x_1, x_2\}\sim\{p_1,p_2=1-p_1\}$ and set $\{u_i\}=\{1,2\}$. It is obvious that its uniquely decodable prefix binary codes can be 0 and 1. Thus, the left side of Eq.(\ref{gen kraft}) is constantly equal to $\frac{1}{2}(u_1+u_2)=\frac{3}{2}$ while the right side is lower than the left if $p_1>0.5$. 
%Moreover, if we select the MIM-formed utility, $u_i=e^{\omega(1-p_i)}$, the result is completely opposite. 
This counterexample shows that the so-called generalized Kraft's inequality Eq.(\ref{gen kraft}) is not necessary for uniquely decodable codes and the constraint for such I-W weighted length should still be the original Kraft's inequality, Eq.(\ref{kraft}).
%Then, we shall derive the coding theorems based on I-W weighted code length for single sources as well as sequence cases respectively.
\subsection{Coding Theorem for Single Source}
For a single source, the aim is to find the code with lengths $l_1,\cdots,l_N$ for each taking value which satisfy the Kraft's inequality Eq.(~\ref{kraft}) and lead to the minimum expected I-W length $\bar L_w$. Then, we have the following optimization problem:
\begin{align}
    \mathcal P_1:\quad &\min\limits_{l_1,\cdots,l_N}\ \sum\limits_{i=1}^Np_iw_il_i\\
    &\begin{array}{l@{\ }l@{}l@{\ }l}
        \notag
        \mbox{s.t.}&\sum\limits_{i=1}^ND^{-l_i}&\leq1.\\
    \end{array}
\end{align}
By solving the above optimization problem, we obtain the upper and the lower bounds of the optimal I-W expected codeword length. This leads to the following importance-aware coding theorem for single source.
\begin{theorem}[I-W Coding Theorem for Single Source]\label{THM1}
    Let $l^*_1,\cdots,l^*_N$ be the optimal codeword lengths for a single source with the distribution $\{p_i\}$ and $D$ be the size of the coding alphabet. Then, the associated I-W expected codeword length $\bar L_w$ of the optimal code satisfies:
    \begin{equation}\label{thm1}
        \mathcal L(w,X)\leq\bar L^*_w<\mathcal L(w,X)+H_w(X)
    \end{equation}
    where $H_w(X)=\sum\limits_{j=1}^Np_jw_j$ is the probabilistic average of weightings and $\mathcal L(w,X)=-\sum\limits_{i=1}^Np_iw_i\log_D\frac{p_iw_i}{H_w(X)}$ is called the importance-aware measure based on weightings $\{w_i\}$ and random variable $X$.
\end{theorem}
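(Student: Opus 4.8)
The plan is to handle $\mathcal P_1$ in the classical two-step fashion used for Shannon's theorem — a real-valued lower bound from an information inequality, and a matching upper bound from an explicit rounded code — but with the \emph{normalized product} $r_i := p_i w_i / H_w(X)$ playing the role that $p_i$ plays in the unweighted case. The first thing to record is that, provided $H_w(X)=\sum_j p_j w_j>0$, the vector $\{r_i\}$ is a genuine probability distribution, since $r_i\ge 0$ and $\sum_i r_i = 1$ by the definition of $H_w(X)$. This observation is what makes the standard machinery applicable, and (as already shown by the counterexample preceding the theorem) the only admissibility constraint in force is the original Kraft inequality Eq.(\ref{kraft}), not the spurious Eq.(\ref{gen kraft}).

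For the lower bound, I would take an arbitrary code with lengths $\{l_i\}$ satisfying Eq.(\ref{kraft}), set $c:=\sum_j D^{-l_j}\le 1$ and $q_i:=D^{-l_i}/c$ (a probability distribution), and write $l_i=-\log_D(c\,q_i)$. Substituting,
\[
\sum_i p_i w_i l_i = -\sum_i p_i w_i \log_D q_i - (\log_D c)\sum_i p_i w_i \;\ge\; -\sum_i p_i w_i \log_D q_i ,
\]
because $-\log_D c\ge 0$ and $\sum_i p_i w_i = H_w(X)\ge 0$. Factoring out $H_w(X)$ and applying Gibbs' inequality to $\{r_i\}$ and $\{q_i\}$, namely $-\sum_i r_i\log_D q_i \ge -\sum_i r_i\log_D r_i$, gives $-\sum_i p_i w_i\log_D q_i \ge -H_w(X)\sum_i r_i\log_D r_i = \mathcal L(w,X)$. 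Since the code was arbitrary, the optimum obeys $\bar L^*_w\ge \mathcal L(w,X)$. (Equivalently, one can obtain the real-valued minimizer $l_i=-\log_D r_i$ from the Lagrangian $\sum_i p_i w_i l_i + \lambda\sum_i D^{-l_i}$, whose stationarity forces $D^{-l_i}\propto p_i w_i$ and hence $\lambda\ln D = H_w(X)$; but the Gibbs-inequality route avoids invoking sufficiency of the KKT conditions.)

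For the upper bound, I would exhibit the concrete admissible code $l_i:=\lceil -\log_D r_i\rceil$. These lengths satisfy Kraft's inequality since $D^{-l_i}\le D^{\log_D r_i}=r_i$ and $\sum_i r_i = 1$, so $\bar L^*_w \le \sum_i p_i w_i l_i$. Using $l_i < -\log_D r_i + 1$ and summing against $p_i w_i$ yields $\sum_i p_i w_i l_i < -\sum_i p_i w_i\log_D r_i + \sum_i p_i w_i = \mathcal L(w,X) + H_w(X)$, which is the claimed strict upper bound; combining the two bounds proves Eq.(\ref{thm1}).

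I do not expect a serious obstacle here: once the surrogate distribution $r_i=p_i w_i/H_w(X)$ is identified, the lower bound is one application of Gibbs' inequality and the upper bound is one rounding estimate, exactly mirroring the proof in \cite{cover2012elements}. The only points requiring a little care are the implicit assumption $H_w(X)>0$ and the usual conventions $0\cdot\log 0 = 0$ and $0\cdot\infty=0$ for symbols with $p_iw_i=0$; and the verification that the ceiling choice does not break Kraft's inequality (it does not, precisely because $\sum_i r_i=1$). The conceptual content — that weighting the code lengths by $w_i$ simply replaces $p_i$ by its $H_w(X)$-normalized tilt — is what the sketch should make transparent, and setting $w_i\equiv 1$ recovers $H_w(X)=1$, $\mathcal L(w,X)=H_D(X)$, and Shannon's theorem.
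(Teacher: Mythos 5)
Your proposal is correct, and its skeleton is the same as the paper's: identify the tilted distribution $r_i=p_iw_i/H_w(X)$, recognize $-\log_D r_i$ as the real-valued optimum, and round up to get the upper bound. Where you differ is in how the lower bound is justified: the paper solves the relaxed problem $\mathcal P_1$ by setting the Lagrangian derivatives to zero (invoking convexity), obtains $\tilde l^*_i=-\log_D\frac{p_iw_i}{H_w}$ and $\tilde{\bar L}^*_w=\mathcal L(w,X)$, and then asserts $l^*_i=\lceil \tilde l^*_i\rceil$ to read off both inequalities; you instead prove the lower bound for \emph{every} Kraft-admissible integer code by writing $q_i=D^{-l_i}/c$ and applying Gibbs' inequality to $\{r_i\},\{q_i\}$, and you prove the upper bound by exhibiting the ceiling code and explicitly checking that it satisfies Kraft because $\sum_i r_i=1$. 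Your route is slightly more careful on two points the paper glosses over: the true integer optimum need not literally equal $\lceil\tilde l^*_i\rceil$ (it only needs to be sandwiched by the relaxed optimum below and the ceiling code above, which is exactly what your two separate arguments deliver), and the Kraft-feasibility of the rounded lengths is verified rather than assumed. The paper's Lagrangian computation, on the other hand, is what produces the explicit per-symbol lengths in Eq.(\ref{li}) that are later used in the numerical section, so it carries some expository value beyond the bound itself. Your side remarks (the need for $H_w(X)>0$, the convention for $p_iw_i=0$, and recovery of Shannon's theorem at $w_i\equiv1$) are consistent with Remark~\ref{rm1}.
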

\begin{proof}
    To solve Problem $\mathcal{P}_1$, we consider the Lagrangian function
    \begin{equation}
        J(\boldsymbol l,\lambda)=\sum\limits_{i=1}^Np_iw_il_i+\lambda\left(\sum\limits_{i=1}^ND^{-l_i}-1\right)
    \end{equation}
    The function is obviously convex. Firstly, without regards to the integer constraint of $l_i$, calculating the partial derivaties with respect to $\{l_i\}$,\ $\lambda$ and setting them to $0$, we obtain 
    \begin{align}
        \label{2.9}
        0&=\frac{\partial J}{\partial l_i}\bigg|_{\tilde l^*_i}=p_iw_i-\lambda D^{-l_i}\ln D\bigg|_{\tilde l^*_i},\ i=1,\cdots,N\\
        \label{2.10}
        0&=\frac{\partial J}{\partial\lambda}\bigg|_{\lambda^*}=\sum\limits_{i=1}^ND^{-l_i}-1.
    \end{align}
    Eq.(\ref{2.9}) leads to $D^{-\tilde l^*_i}=\frac{p_iw_i}{\lambda\ln D}$. Then, by substituting the term into Eq.(\ref{2.10}), we obtain
    \begin{equation}
        \tilde l^*_i=-\log_D\frac{p_iw_i}{\sum p_iw_i}=-\log_D\frac{p_iw_i}{H_w}
    \end{equation}
    and
    \begin{equation}
        \tilde{\bar L}^*_w=\mathcal L(w,X).
    \end{equation}
    Moreover, the codeword length $l_i$ should be integer which means that $l^*_i=\lceil\tilde{l}^*_i\rceil$, i.e., 
    \begin{equation}\label{li}
        -\log_D\frac{p_iw_i}{H_w}\leq l^*_i<-\log_D\frac{p_iw_i}{H_w}+1.
    \end{equation}
    Hence, we obtain the Eq.(\ref{thm1}). 
\end{proof}
\begin{remark}\label{rm1}
    Note that if we set $w_i=1$ for all symbols, the theorem is exactly equivalent to Shannon's coding theorem for single symbols: $H(X)\leq\bar L<H(X)+1.$
\end{remark}

\subsection{Coding Theorem for Sequence Source}
In the cases where the sequences of $n$ symbols are encoded, we regard the sequence $\boldsymbol X=(X_1,\cdots,X_n)$ as a supersymbol. Similarly, we define the I-W expected codeword length per symbol as 
\begin{equation}\label{lnw}
    \bar L_{n,w}:=\frac{1}{n}\sum\limits_{\boldsymbol x=x_1,\cdots,x_n}p(\boldsymbol x)w(\boldsymbol x)l(\boldsymbol x)
\end{equation}
where $w(\boldsymbol x)$ is the weight of an $n$-length sequence. Especially, while $X_1,\cdots,X_n$ are i.i.d., the weight is the product of the weights of all the sub-symbols,
\begin{equation}\label{wx}
    w(\boldsymbol x)=w(x_{k_1},\cdots,x_{k_n})=\prod\limits_{i=1}^nw_{k_i}
\end{equation}
Then, we can obtain the I-W coding theorem for sequences as \textbf{Theorem}~\ref{THM2}.
\begin{theorem}[I-W Coding Theorem for Sequence Source]\label{THM2}
    For a sequence with $n$ symbols, the optimal I-W expected codeword length per symbol $\bar L^*_{n,w}$ satisfies 
    \begin{equation}\label{thm2ori}
        \frac{\mathcal L(w,\boldsymbol X)}{n}\leq\bar L^*_{n,w}<\frac{\mathcal L(w,\boldsymbol X)}{n}+\frac{H_w(\boldsymbol X)}{n}
    \end{equation}
    where
    \begin{equation}\label{hwx}
        H_w(\boldsymbol X)=\sum\limits_{\boldsymbol x}p(\boldsymbol x)w(\boldsymbol x)
    \end{equation}
    is the probabilistic average of sequence weightings and 
    \begin{equation}\label{lwx}
        \mathcal L(w,\boldsymbol X)=-\sum\limits_{\boldsymbol x}p(\boldsymbol x)w(\boldsymbol x)\log_D\frac{p(\boldsymbol x)w(\boldsymbol x)}{H_w(\boldsymbol X)}.
    \end{equation}
    Moreover, if all the sub-symbols $X_1,\cdots,X_n$ are i.i.d., the bounds of $\bar L^*_{n,w}$ can be written as 
    \begin{equation}\label{thm2simple}
        H_\omega^{n-1}(X)\mathcal L(w,X)\leq\bar L^*_{n,w}<H_\omega^{n-1}(X)\mathcal L(w,X)+\frac{H_\omega^{n}(X)}{n}
    \end{equation}
\end{theorem}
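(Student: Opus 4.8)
The plan is to derive \eqref{thm2ori} directly from Theorem~\ref{THM1} by regarding $\boldsymbol X=(X_1,\dots,X_n)$ as a single ``supersymbol'' source, and then to obtain the i.i.d.\ form \eqref{thm2simple} by a factorization computation that collapses $\mathcal L(w,\boldsymbol X)$ and $H_w(\boldsymbol X)$ into single-symbol quantities.

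First I would view $\boldsymbol X$ as a source whose values are the sequences $\boldsymbol x=(x_{k_1},\dots,x_{k_n})$, with ``probability'' $p(\boldsymbol x)$ and ``weight'' $w(\boldsymbol x)$, and apply Theorem~\ref{THM1} verbatim after the substitutions $p_i\mapsto p(\boldsymbol x)$ and $w_i\mapsto w(\boldsymbol x)$; the Kraft constraint is unchanged since it already refers to the codewords assigned to the supersymbol alphabet. This gives $\mathcal L(w,\boldsymbol X)\le \bar L^{*}_{w}(\boldsymbol X)<\mathcal L(w,\boldsymbol X)+H_w(\boldsymbol X)$, and dividing through by $n$ together with $\bar L^{*}_{n,w}=\tfrac1n\bar L^{*}_{w}(\boldsymbol X)$ yields \eqref{thm2ori}.

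Next, under the i.i.d.\ assumption I would substitute $p(\boldsymbol x)=\prod_{i=1}^{n}p_{k_i}$ and, by \eqref{wx}, $w(\boldsymbol x)=\prod_{i=1}^{n}w_{k_i}$. The normalizer then factors as $H_w(\boldsymbol X)=\sum_{\boldsymbol x}\prod_{i=1}^{n}p_{k_i}w_{k_i}=\prod_{i=1}^{n}\bigl(\sum_{k}p_kw_k\bigr)=H_w(X)^{n}$. For $\mathcal L(w,\boldsymbol X)$ I would expand the logarithm of the product, $\log_D\frac{p(\boldsymbol x)w(\boldsymbol x)}{H_w(\boldsymbol X)}=\sum_{j=1}^{n}\log_D\frac{p_{k_j}w_{k_j}}{H_w(X)}$, interchange the sum over $j$ with the sum over $\boldsymbol x$, and for each fixed $j$ marginalize the product measure over the coordinates $i\ne j$ (each contributing a factor $\sum_k p_kw_k=H_w(X)$). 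The surviving coordinate gives $\sum_k p_kw_k\log_D\frac{p_kw_k}{H_w(X)}=-\mathcal L(w,X)$, so each of the $n$ terms equals $H_w(X)^{n-1}\mathcal L(w,X)$, hence $\mathcal L(w,\boldsymbol X)=n\,H_w(X)^{n-1}\mathcal L(w,X)$.

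Finally, inserting $\mathcal L(w,\boldsymbol X)/n=H_w(X)^{n-1}\mathcal L(w,X)$ and $H_w(\boldsymbol X)/n=H_w(X)^{n}/n$ into \eqref{thm2ori} gives exactly \eqref{thm2simple} (with $H_\omega$ read as $H_w$). I expect the only slightly delicate step to be the index bookkeeping in the factorization of $\mathcal L(w,\boldsymbol X)$ — justifying the sum interchange and the per-coordinate marginalization — but this is a routine product-measure manipulation once the labelling $\boldsymbol x=(x_{k_1},\dots,x_{k_n})$ is fixed; the rest is an immediate appeal to Theorem~\ref{THM1}.
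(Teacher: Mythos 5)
Your proposal is correct and follows essentially the same route as the paper: apply Theorem~\ref{THM1} to the supersymbol $\boldsymbol X$ and divide by $n$, then in the i.i.d.\ case use the product-measure factorizations $H_w(\boldsymbol X)=H_w(X)^{n}$ and $\mathcal L(w,\boldsymbol X)=n\,H_w(X)^{n-1}\mathcal L(w,X)$, which is exactly the content of the paper's Lemma~\ref{lemma1}. Your coordinate-wise marginalization argument matches the paper's computation, so no gap remains.
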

\begin{proof}
    Eq.(\ref{thm2ori}) is the direct result of \textbf{Theorem}~\ref{THM1} for super symbol $\boldsymbol X=(X_1,\cdots,X_n)$. While $X_1,\cdots,X_n$ are i.i.d., Eq.(\ref{thm2simple}) can be proved by the following \textbf{Lemma}~\ref{lemma1}.
\end{proof}
\begin{lemma}\label{lemma1}
    Let $\boldsymbol X= (X_1,\cdots,X_n)$ be the super symbol, for i.i.d. cases, we have the following equalities, 
    \begin{align}
        H_w(\boldsymbol X)&=H_w^n(X)\\
        \mathcal L(w,\boldsymbol X)&=nH_\omega^{n-1}(X)\mathcal L(w,X)
    \end{align}
\end{lemma}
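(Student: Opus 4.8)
The plan is to exploit the product structure that the i.i.d.\ assumption imposes on both $p(\boldsymbol x)$ and $w(\boldsymbol x)$, so that every multi-dimensional sum over length-$n$ sequences factorizes into a product of single-letter sums. Writing $\boldsymbol x=(x_{k_1},\dots,x_{k_n})$, independence gives $p(\boldsymbol x)=\prod_{i=1}^n p_{k_i}$ and Eq.(\ref{wx}) gives $w(\boldsymbol x)=\prod_{i=1}^n w_{k_i}$, hence $p(\boldsymbol x)w(\boldsymbol x)=\prod_{i=1}^n p_{k_i}w_{k_i}$. For the first identity I would then simply sum over all $N^n$ sequences and use that a sum of a product over free indices splits as a product of sums: $H_w(\boldsymbol X)=\sum_{\boldsymbol x}\prod_{i=1}^n p_{k_i}w_{k_i}=\prod_{i=1}^n\bigl(\sum_{k=1}^N p_k w_k\bigr)=H_w^n(X)$.

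For the second identity I would carry the same factorization inside the logarithm. Since $\tfrac{p(\boldsymbol x)w(\boldsymbol x)}{H_w(\boldsymbol X)}=\prod_{i=1}^n\tfrac{p_{k_i}w_{k_i}}{H_w(X)}$ (using $H_w(\boldsymbol X)=H_w^n(X)$ from the first part), we get $\log_D\tfrac{p(\boldsymbol x)w(\boldsymbol x)}{H_w(\boldsymbol X)}=\sum_{i=1}^n\log_D\tfrac{p_{k_i}w_{k_i}}{H_w(X)}$. Substituting into Eq.(\ref{lwx}) and interchanging the two finite sums yields $\mathcal L(w,\boldsymbol X)=-\sum_{i=1}^n\sum_{\boldsymbol x}\bigl(\prod_{j=1}^n p_{k_j}w_{k_j}\bigr)\log_D\tfrac{p_{k_i}w_{k_i}}{H_w(X)}$. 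For each fixed $i$, I would split $\prod_{j=1}^n p_{k_j}w_{k_j}=\bigl(\prod_{j\neq i}p_{k_j}w_{k_j}\bigr)\,p_{k_i}w_{k_i}$; summing out the $n-1$ coordinates $j\neq i$ contributes the factor $\prod_{j\neq i}\bigl(\sum_k p_kw_k\bigr)=H_w^{n-1}(X)$, while the leftover $k_i$-sum is exactly $-\sum_k p_kw_k\log_D\tfrac{p_kw_k}{H_w(X)}=\mathcal L(w,X)$. Adding the $n$ identical terms gives $\mathcal L(w,\boldsymbol X)=nH_w^{n-1}(X)\mathcal L(w,X)$.

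I do not expect a genuine obstacle here, since the whole argument is a routine additivity-over-independent-coordinates calculation; the only point demanding care is the bookkeeping in the interchange-and-peel step, namely checking that the $H_w^{n-1}(X)$ factor really comes from precisely the $n-1$ coordinates $j\neq i$ and that the surviving coordinate reproduces $\mathcal L(w,X)$ verbatim, with the shared normalizer $H_w(X)$ rather than $H_w(\boldsymbol X)$. Once both equalities of Lemma~\ref{lemma1} are in hand, substituting $H_w(\boldsymbol X)=H_w^n(X)$ and $\mathcal L(w,\boldsymbol X)=nH_w^{n-1}(X)\mathcal L(w,X)$ into the bounds Eq.(\ref{thm2ori}) and dividing through by $n$ immediately produces Eq.(\ref{thm2simple}), which completes the proof of Theorem~\ref{THM2}.
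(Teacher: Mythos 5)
Your proposal is correct and follows essentially the same route as the paper's proof: factorizing $p(\boldsymbol x)w(\boldsymbol x)=\prod_i p_{k_i}w_{k_i}$ to get $H_w(\boldsymbol X)=H_w^n(X)$, then splitting the logarithm into $n$ single-letter terms, interchanging sums, and summing out the remaining $n-1$ coordinates to obtain the factor $H_w^{n-1}(X)$ times $\mathcal L(w,X)$ for each of the $n$ terms. Your write-up merely makes explicit the peel-off bookkeeping that the paper leaves implicit in its final step.
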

\begin{proof}
    Firstly, for i.i.d. cases where $p(\boldsymbol x)=\prod\limits_{i=1}^np(X_i)$, by Eq.(\ref{wx} and Eq.(\ref{hwx}), we have
    \begin{equation}\notag
        \begin{aligned}
            H_w(\boldsymbol X)&=\sum\limits_{\boldsymbol x=X_1,\cdots,X_n}\prod\limits_{i=1}^np(X_i)w(X_i)\\
            &=\prod\limits_{i=1}^n\left(\sum\limits_{X_i}p(X_i)w(X_i)\right)=H_w^n(X).
        \end{aligned}
    \end{equation} 
    Then, from Eq.(\ref{lwx}), we have
    \begin{equation}\notag
        \begin{aligned}
            \mathcal L(w,\boldsymbol X)&=
            % -\sum\limits_{\boldsymbol x}p(\boldsymbol x)w(\boldsymbol x)\log_D\frac{p(\boldsymbol x)w(\boldsymbol x)}{H_w(\boldsymbol X)}\\
            -\sum\limits_{\boldsymbol x}\prod\limits_{i=1}^np(X_i)w(X_i)\log_D\frac{\prod\limits_{j=1}^np(X_j)w(X_j)}{H_w^n(X)}\\
            &=-\sum\limits_{\boldsymbol x}\sum\limits_{j=1}^n\log_D\frac{p(X_j)w(X_j)}{H_w(X)}\prod\limits_{i=1}^np(X_i)w(X_i)\\
            &=nH_\omega^{n-1}(X)\mathcal L(w,X).
        \end{aligned}
    \end{equation}
\end{proof}
\begin{remark}
    By setting $w_i=1$, we obtain Shannon's coding theorem for sequences:
    \begin{equation}
        \frac{H(X_1,\cdots,X_n)}{n}\leq\bar L_n<\frac{H(X_1,\cdots,X_n)}{n}+\frac{1}{n}
    \end{equation}
    and $H(X)\leq\bar L_n<H(X)+\frac{1}{n}$ for i.i.d. cases.
\end{remark}

\section{Coding Theorem under MIM-weighted Measure}
We shall now consider a specific form of the importance weightings. Message importance measure (MIM)~\cite{fan2016message}, as an information theoretical measure of the importance, is defined from the probability of each taking value as
\begin{equation}\label{mim}
    M\!I\!M(X;\omega)=\sum\limits_{i=1}^Np_ie^{\omega(1-p_i)}
\end{equation}
where $\omega$ is the importance coefficient. MIM has similar properties to Shannon's entropy such as convexity property, independent probability property, and minimum/maximum value property. Besides, considering the importance proportion of each value, the unnormalized importance factor $M\!I\!M(x_i;\omega)$ of taking value $x_i$ is defined as $M\!I\!M(x_i;\omega)=p_ie^{\omega(1-p_i)}$. Note that $M\!I\!M(X;\omega)$ is actually the sum of $M\!I\!M(x_i;\omega)$ for all taking values. Then, dividing by $M\!I\!M(X;\omega)$, the normalized form of the importance for each value is
\begin{equation}
    \label{nmim}
    M\!I\!M\!_N(x_i;\omega)=\frac{M\!I\!M(x_i;\omega)}{M\!I\!M(X;\omega)}.
\end{equation}
% \begin{equation}
%     \label{nmim}
%     M\!I\!M(x_i;\omega)=\frac{p_ie^{\omega(1-p_i)}}{\sum\limits_{j=1}^Np_je^{\omega(1-p_j)}}.
% \end{equation}
MIM can zoom in small or large probability by setting different parameters $\omega$. Generally, a positive $\omega$ leads to more focus on smaller probability elements and a negative $\omega$ amplifies the impact of larger probability elements. Moreover, it is noted that $x_i$ has the most contribution if $\omega$ is set to be $\frac{1}{p_i}$, which means that the parameter can be properly selected to reflect users' preference. For instance, while users consider the importance of each class from the probabilistic view, by assigning $\omega$ according to the class they prefer, such elements take the most weights~\cite{she2017focusing,liu2018switch}.  Motivated by this, we set MIM as the importance weights in Eq.(\ref{Lw}),
\begin{equation}
    \label{miml}
    \bar L(\omega):=\sum\limits_{i=1}^NM\!I\!M\!_N(x_i;\omega)l_i
\end{equation}
and obtain the codeword length bounds under the MIM-weighted measure as the following corollaries.

\begin{corollary}[MIM-weighted Coding Theorem for Single Source]\label{cor1}
    For MIM-weighted codeword length with the parameter $\omega$, the optimal measure $\bar L^*(\omega)$ satisfies:
    \begin{equation}
        H\!_{M\!I\!M}(X;\omega)\leq\bar L^*(\omega)<H\!_{M\!I\!M}(X;\omega)+1
    \end{equation}
    where
    \begin{equation}\notag
        H_{M\!I\!M}(X;\omega)=-\sum\limits_{i=1}^NM\!I\!M\!_N(x_i;\omega)\log_DM\!I\!M\!_N(x_i;\omega)
    \end{equation}
    can be regarded as the entropy under the MIM measure.
\end{corollary}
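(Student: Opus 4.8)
The plan is to obtain Corollary~\ref{cor1} as a direct specialization of \textbf{Theorem}~\ref{THM1}, with no new optimization to solve. The first step is to identify the importance weights $\{w_i\}$ that turn the generic I-W expected length $\bar L_w=\sum_i p_iw_il_i$ into the MIM-weighted length $\bar L(\omega)=\sum_i M\!I\!M\!_N(x_i;\omega)l_i$ of Eq.(\ref{miml}). Matching the two expressions term by term, I would take
\begin{equation}\notag
    w_i=\frac{M\!I\!M\!_N(x_i;\omega)}{p_i}=\frac{e^{\omega(1-p_i)}}{M\!I\!M(X;\omega)},
\end{equation}
so that $p_iw_i=M\!I\!M\!_N(x_i;\omega)$ and hence $\bar L_w=\bar L(\omega)$ identically; in particular the optimal values $\bar L^*_w$ and $\bar L^*(\omega)$ coincide, and since the encoding constraint in Problem $\mathcal P_1$ is Kraft's inequality regardless of the weights, \textbf{Theorem}~\ref{THM1} applies verbatim to this choice.

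Next I would evaluate the two quantities appearing in the bound Eq.(\ref{thm1}) for these weights. For the probabilistic average of weightings, $H_w(X)=\sum_j p_jw_j=\frac{1}{M\!I\!M(X;\omega)}\sum_j p_je^{\omega(1-p_j)}=1$ by the definition Eq.(\ref{mim}) of $M\!I\!M(X;\omega)$. Substituting $H_w(X)=1$ together with $p_iw_i=M\!I\!M\!_N(x_i;\omega)$ into $\mathcal L(w,X)=-\sum_i p_iw_i\log_D\frac{p_iw_i}{H_w(X)}$ collapses it to $\mathcal L(w,X)=-\sum_i M\!I\!M\!_N(x_i;\omega)\log_D M\!I\!M\!_N(x_i;\omega)=H\!_{M\!I\!M}(X;\omega)$.

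Finally, plugging these two identities into Eq.(\ref{thm1}) gives $H\!_{M\!I\!M}(X;\omega)\leq\bar L^*(\omega)<H\!_{M\!I\!M}(X;\omega)+1$, which is exactly the claimed statement. I do not expect a genuine obstacle here: the argument is a pure substitution into the already-proved \textbf{Theorem}~\ref{THM1}. The only point that deserves a moment's care is the bookkeeping of the normalization --- recognizing that the factor $M\!I\!M(X;\omega)$ in the denominator of the weights is precisely what drives $H_w(X)$ to $1$, so that the residual gap term in the single-source bound degenerates to the universal constant $1$ (mirroring Remark~\ref{rm1}) rather than staying a distribution-dependent quantity.
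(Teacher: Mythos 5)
Your proposal is correct and follows exactly the paper's route: the paper also sets $w_i=\frac{e^{\omega(1-p_i)}}{M\!I\!M(X;\omega)}$ and invokes Theorem~\ref{THM1}, merely stating the result without writing out the substitutions. Your explicit verification that $H_w(X)=1$ and $\mathcal L(w,X)=H_{M\!I\!M}(X;\omega)$ simply fills in the computation the paper leaves implicit.
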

\begin{corollary}[MIM-weighted Coding Theorem for Sequence Source]\label{cor2}
    The optimal MIM-weighted codeword length $\bar L^*_n(\omega)$ for i.i.d. sequence source is bounded by:
    \begin{equation}
            H\!_{M\!I\!M}(X;\omega)\leq\bar L^*_n(\omega)<H\!_{M\!I\!M}(X;\omega)+\frac{1}{n}
    \end{equation}
\end{corollary}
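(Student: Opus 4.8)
The plan is to obtain Corollary~\ref{cor2} as a direct specialization of the i.i.d. bound~(\ref{thm2simple}) in \textbf{Theorem}~\ref{THM2}, once the MIM weighting is matched to the generic weight $w_i$ appearing there. First I would pin down the weight: comparing the cost $\bar L(\omega)=\sum_i M\!I\!M\!_N(x_i;\omega)l_i$ of Eq.~(\ref{miml}) with the template $\bar L_w=\sum_i p_iw_il_i$ of Eq.~(\ref{Lw}) forces $w_i=M\!I\!M\!_N(x_i;\omega)/p_i=e^{\omega(1-p_i)}/M\!I\!M(X;\omega)$. Since the admissible codes are still exactly those satisfying the ordinary Kraft inequality, the optimization behind $\bar L^*_n(\omega)$ is literally $\mathcal P_1$ for the length-$n$ supersymbol with this choice of $\{w_i\}$, so \textbf{Theorem}~\ref{THM2} applies without change, with block weight $w(\boldsymbol x)=\prod_i w_{k_i}$ as in Eq.~(\ref{wx}).

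The crux --- and the only step that is more than bookkeeping --- is evaluating $H_w(X)$ and $\mathcal L(w,X)$ for this weighting, where the normalization built into $M\!I\!M\!_N$ does the work:
\begin{equation}\notag
    H_w(X)=\sum_{j=1}^N p_jw_j=\frac{\sum_{j=1}^N p_je^{\omega(1-p_j)}}{M\!I\!M(X;\omega)}=1,
\end{equation}
so that $p_iw_i/H_w(X)=M\!I\!M\!_N(x_i;\omega)$ and hence $\mathcal L(w,X)=-\sum_i M\!I\!M\!_N(x_i;\omega)\log_D M\!I\!M\!_N(x_i;\omega)=H_{M\!I\!M}(X;\omega)$. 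Feeding these two identities into Eq.~(\ref{thm1}) already yields Corollary~\ref{cor1}.

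For the sequence case I would then invoke \textbf{Lemma}~\ref{lemma1}, which for the i.i.d. product weight gives $H_w(\boldsymbol X)=H_w^n(X)=1$ and $\mathcal L(w,\boldsymbol X)=nH_w^{n-1}(X)\mathcal L(w,X)=n\,H_{M\!I\!M}(X;\omega)$; equivalently, I plug $H_\omega(X)=1$ and $\mathcal L(w,X)=H_{M\!I\!M}(X;\omega)$ directly into Eq.~(\ref{thm2simple}). Because $H_\omega^{n-1}(X)=1$ and $H_\omega^{n}(X)/n=1/n$, the bound collapses to $H_{M\!I\!M}(X;\omega)\le\bar L^*_n(\omega)<H_{M\!I\!M}(X;\omega)+\frac{1}{n}$, which is the assertion.

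The one thing to be careful about --- and the closest thing to an obstacle --- is the reading of the block weight: per Eq.~(\ref{wx}) it is the product of per-symbol MIM weights, so $p(\boldsymbol x)w(\boldsymbol x)$ is the genuine $n$-fold product measure $\prod_i M\!I\!M\!_N(x_{k_i};\omega)$, which is \emph{not} the same as the normalized MIM of the whole supersymbol (the two exponents $\omega(1-\prod_i p_{k_i})$ and $\omega\sum_i(1-p_{k_i})$ differ for $n\ge 2$). With the product convention of Eq.~(\ref{wx}) everything is consistent and \textbf{Lemma}~\ref{lemma1} applies verbatim, so no further estimates are needed.
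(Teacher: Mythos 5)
Your proposal is correct and follows the same route as the paper: setting $w_i=e^{\omega(1-p_i)}/M\!I\!M(X;\omega)$ and invoking Theorems~\ref{THM1} and~\ref{THM2} is exactly the paper's proof, and you additionally spell out the key simplifications ($H_w(X)=1$, $\mathcal L(w,X)=H_{M\!I\!M}(X;\omega)$) that the paper leaves implicit. Your remark that the block weight is the product of per-symbol MIM weights (not the normalized MIM of the supersymbol) is a correct reading of Eq.~(\ref{wx}) and matches the paper's convention.
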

\begin{proof}
    Set the I-W factors in Eq.(\ref{Lw}) and Eq.(\ref{wx}) as the normalized exponential weights, $w_i=\frac{e^{\omega(1-p_i)}}{M\!I\!M(X;\omega)}$, by \textbf{Theorem}~\ref{THM1} and \textbf{Theorem}~\ref{THM2} the corollaries above can be proved.
\end{proof}
\begin{remark}
    \label{rm3}
    Shannon's coding bounds are special cases of the MIM-weighted coding theorems when $\omega=0$.
\end{remark}

\section{Numerical Results}
In this section, we will show several simulation results of the MIM weighted coding theorems and investigate the properties of the I-W measure as well as its codeword bounds.
\begin{figure}[htbp]
    \vspace{-0.02\textheight}
    \setlength{\abovecaptionskip}{-0.007\textheight}
    \setlength{\belowcaptionskip}{-0.008\textheight}
    \centering
    \subfigure[$\omega=-1$.]{
      \label{f01} %% label for first subfigure
      \begin{minipage}{0.48\linewidth}
        \includegraphics[width=1\textwidth]{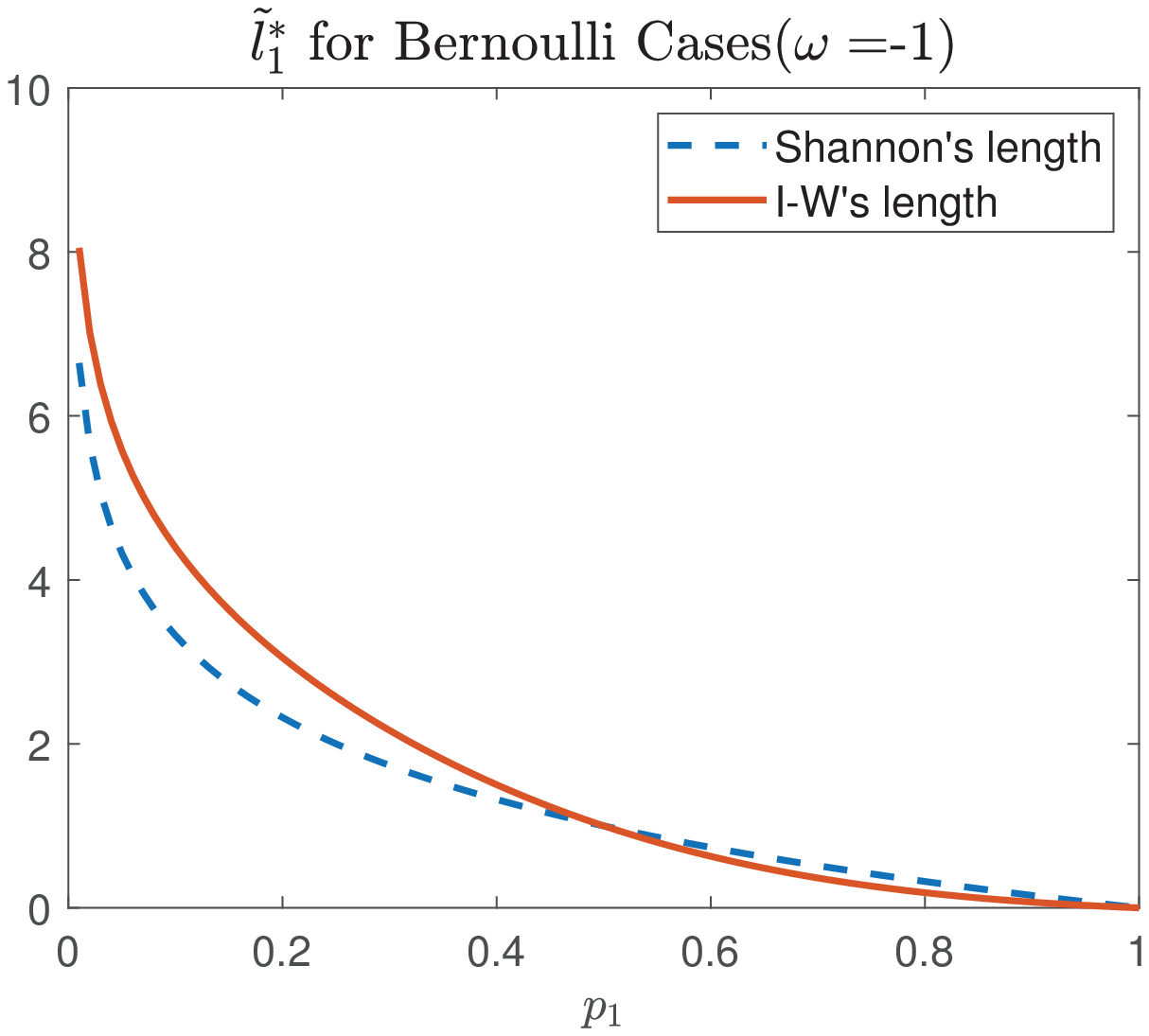}
      \end{minipage}}%
    \subfigure[$\omega=1$.]{
      \label{f02} %% label for second subfigure
      \begin{minipage}{0.48\linewidth}
        \includegraphics[width=1\textwidth]{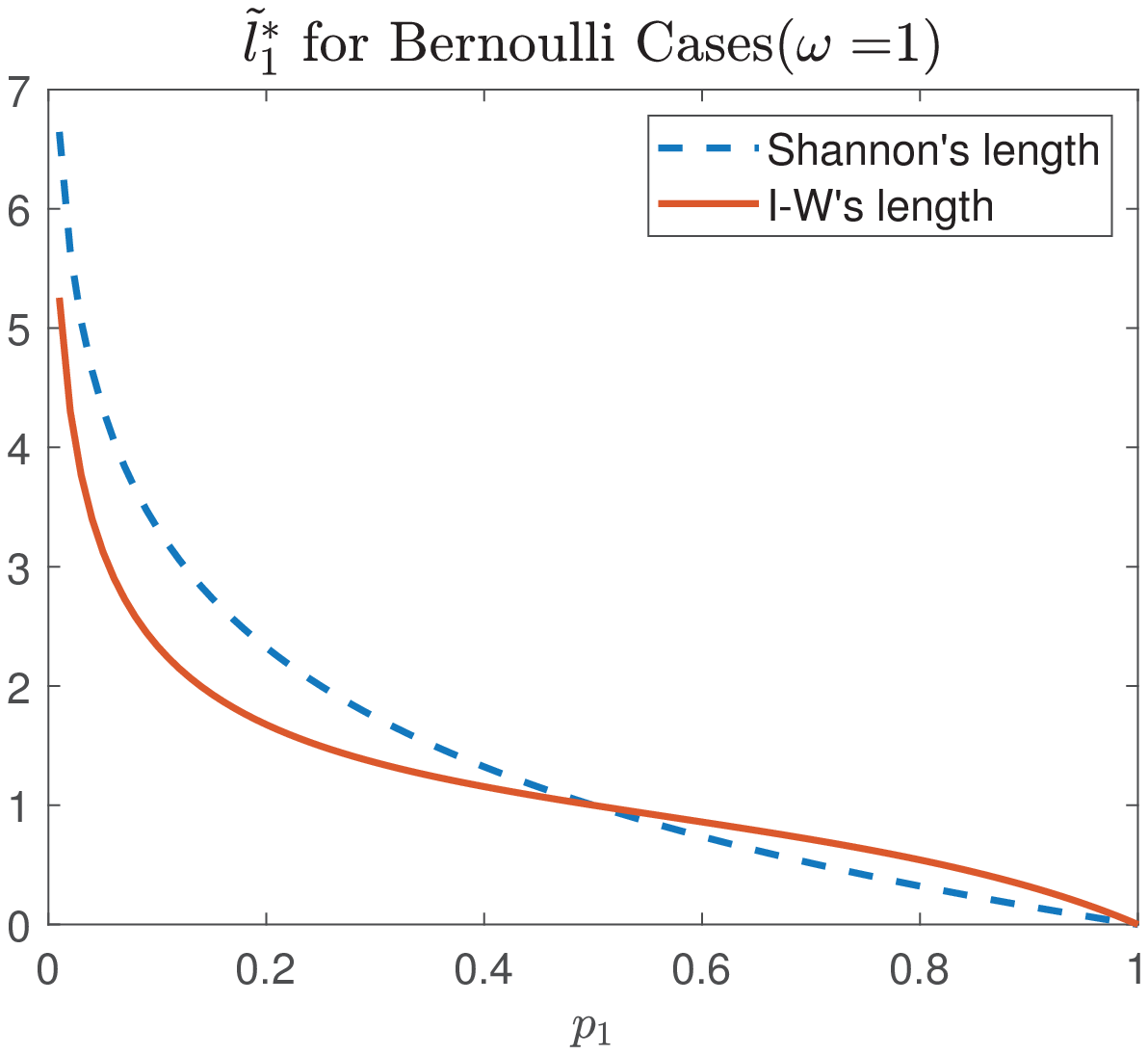}
      \end{minipage}}
    \caption{The optimal length $\tilde l^*_1$ for binary sources.}
    \label{fig0} %% label for entire figure
    \vspace{-0.025\textheight}
\end{figure}

Fig.\ref{fig0} shows the optimal codeword length $\tilde l^*_1$ for $x_1$ with various $p_1$ for Bernoulli sources under Shannon's coding theorem and I-W's according to Eq.(\ref{li}). It is obvious that $\tilde l^*_1$ under the MIM weighted measure shows different features compared to Shannon's coding theory. Specifically, for $\omega<0$, MIM weighted measure magnifies the influence of symbols with larger probability which results in larger importance and less redundency for codewords. On the contrary, while $\omega>0$, MIM weighted measure focuses on the symbols with smaller probability and hence the characteristics are opposite. In other words, the codeword length under MIM weighted coding theorems is more flexible than Shannon's and we can choose to make compression or reserve redundancy for the symbols users concern by setting the corresponding importance coefficient $\omega$.
\begin{figure}[htbp]
    \vspace{-0.02\textheight}
    \setlength{\abovecaptionskip}{-0.007\textheight}
    \setlength{\belowcaptionskip}{-0.008\textheight}
    \centering
    \subfigure[$\omega=-4$.]{
      \label{f1} %% label for first subfigure
      \begin{minipage}{0.48\linewidth}
        \includegraphics[width=1\textwidth]{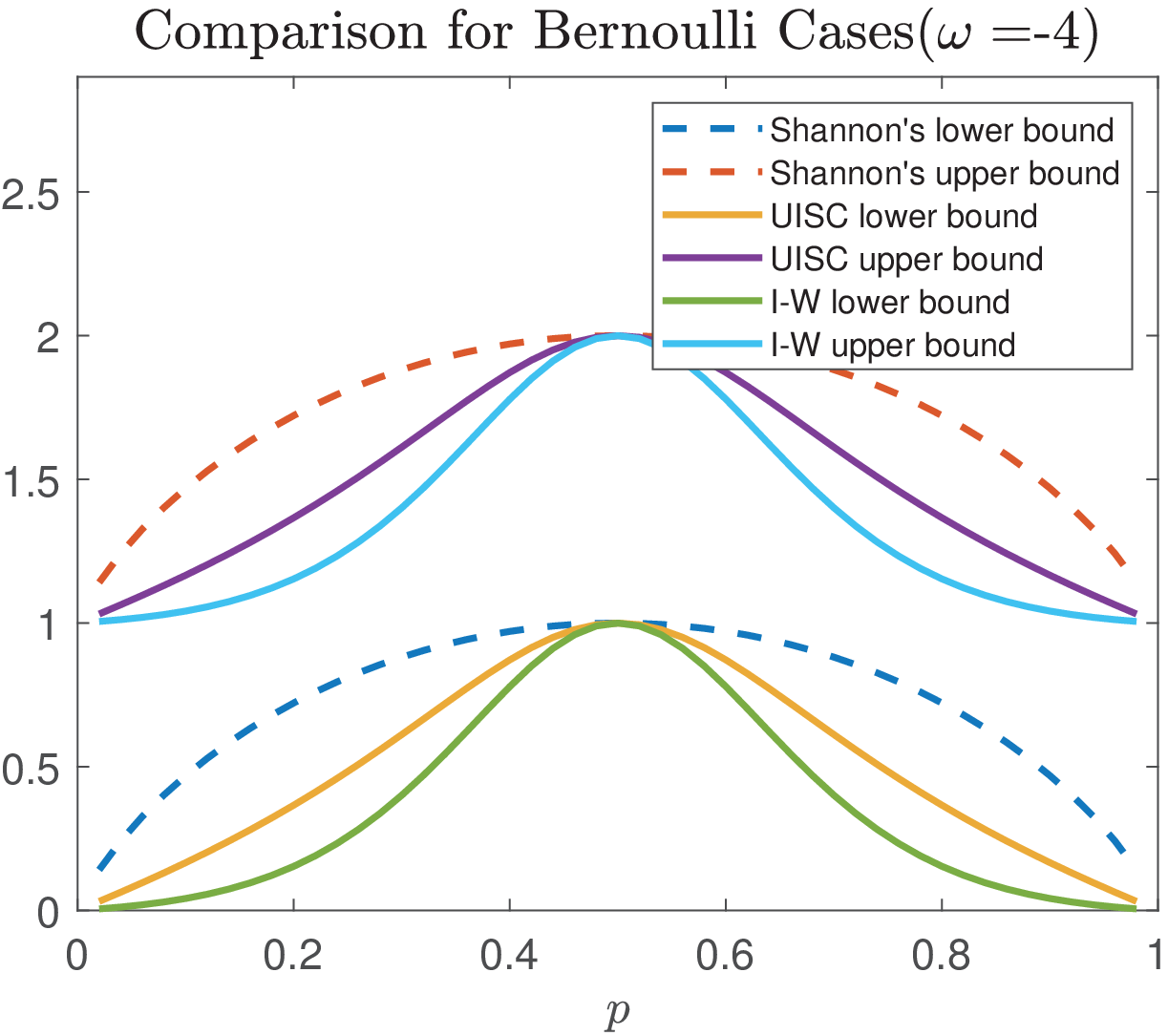}
      \end{minipage}}%
    \subfigure[$\omega=1$.]{
      \label{f2} %% label for second subfigure
      \begin{minipage}{0.48\linewidth}
        \includegraphics[width=1\textwidth]{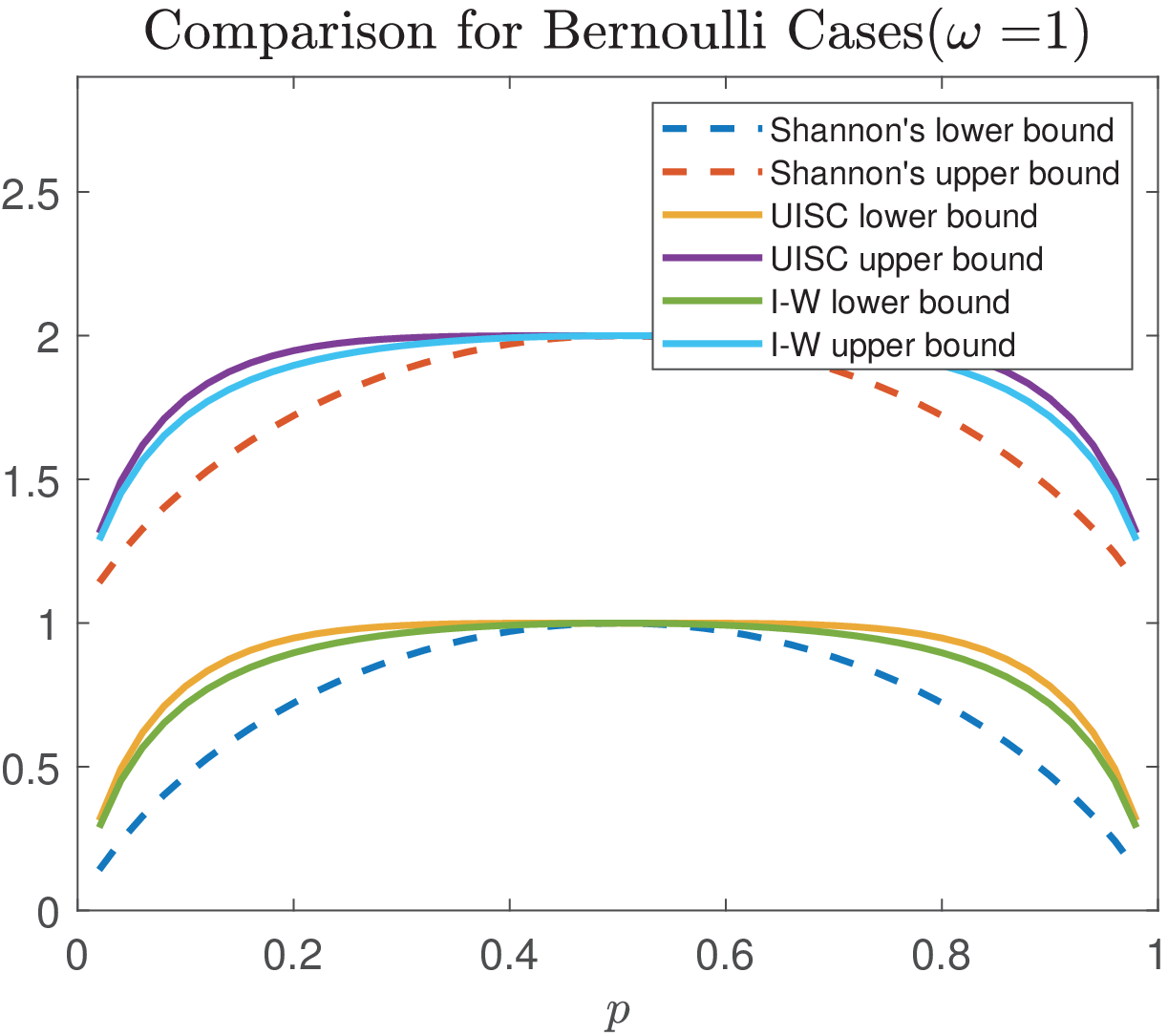}
      \end{minipage}}

    \subfigure[$\omega=4$.]{
      \label{f3} %% label for second subfigure
      \begin{minipage}{0.48\linewidth}
        \includegraphics[width=1\textwidth]{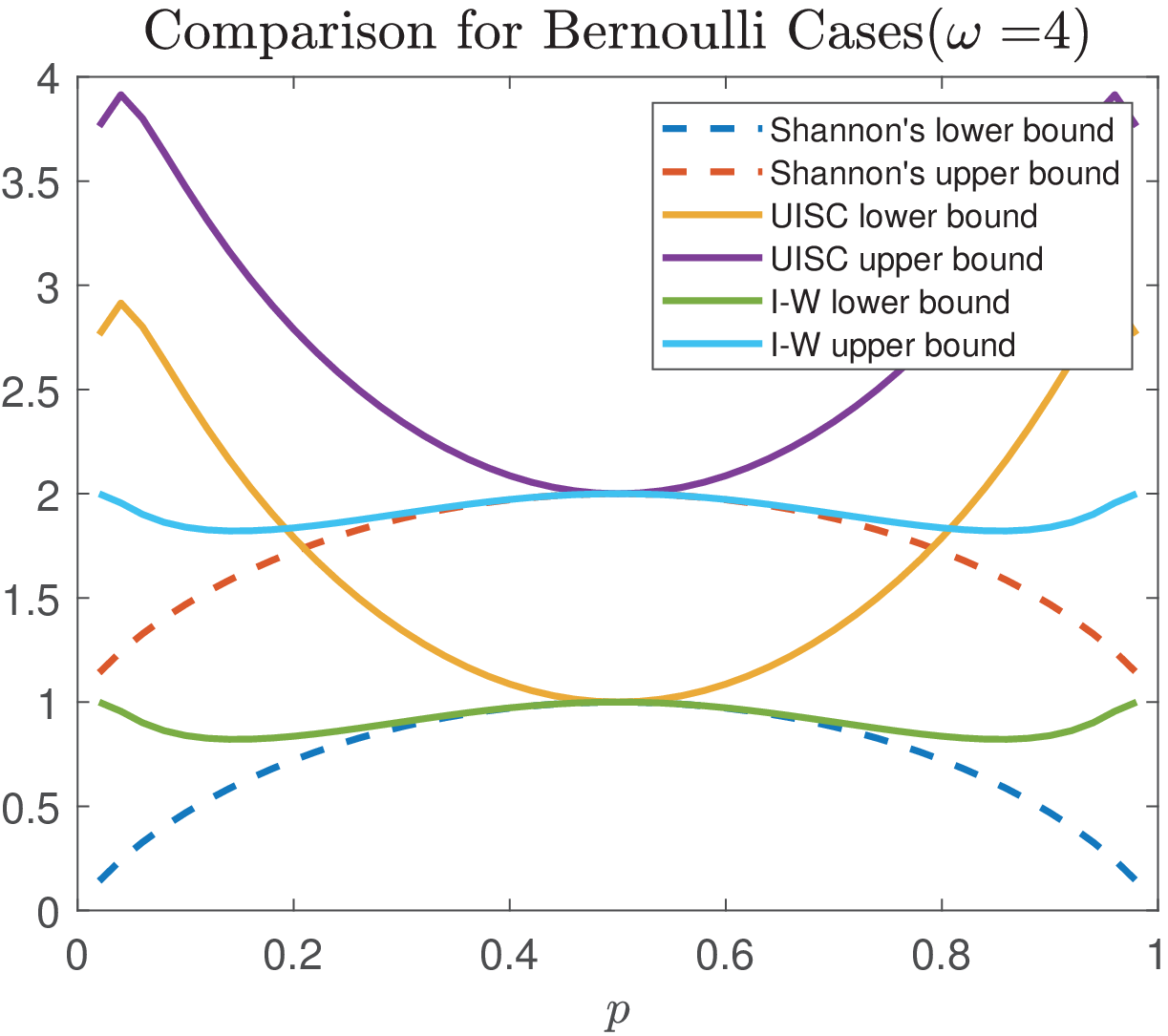}
      \end{minipage}}
    \subfigure[$\omega=8$.]{
      \label{f4} %% label for second subfigure
      \begin{minipage}{0.48\linewidth}
        \includegraphics[width=1\textwidth]{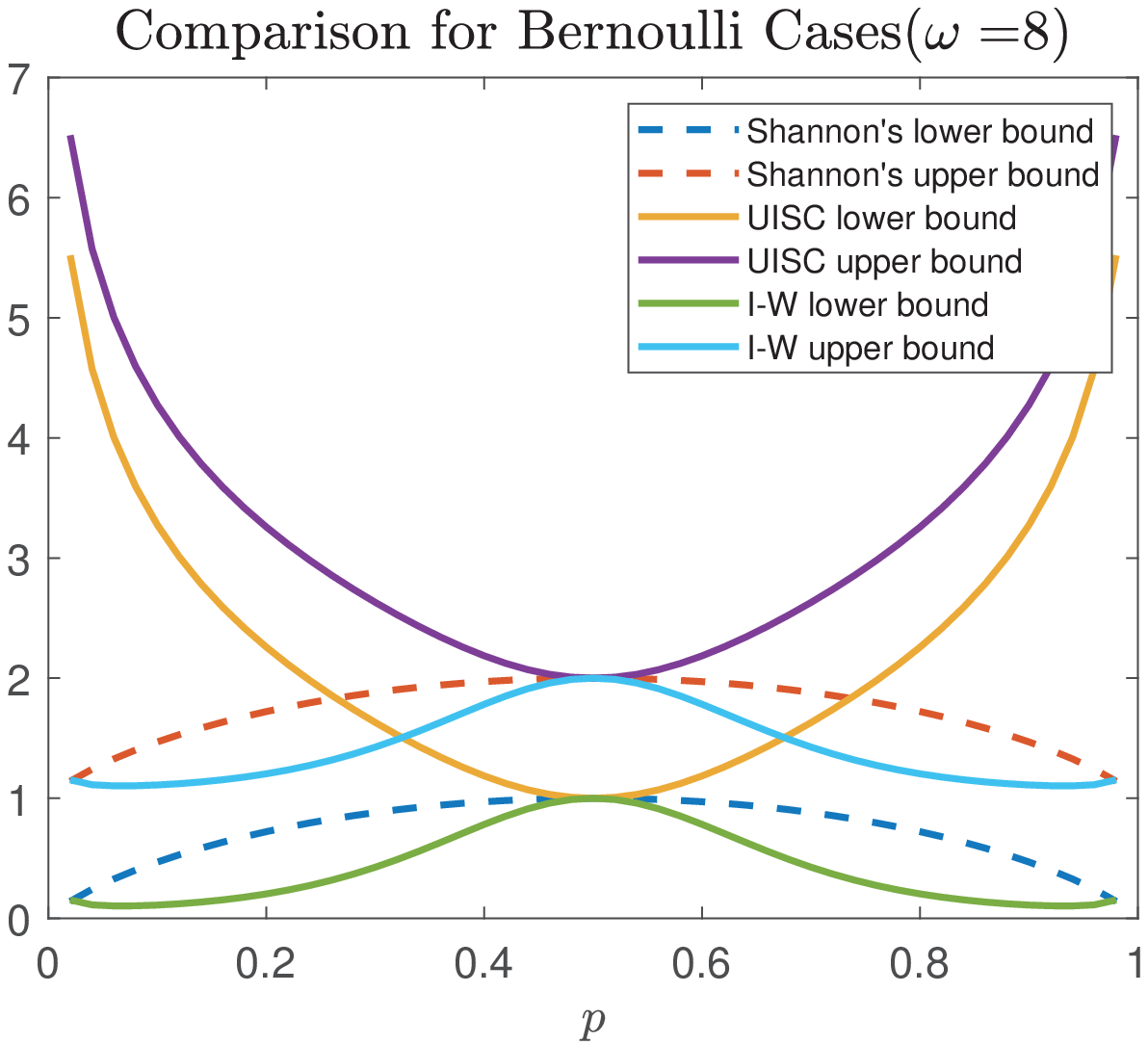}
      \end{minipage}}
    \caption{Comparisons of Shannon's coding theorem, UISC and I-W coding theorem for Bernoulli sources.}
    \label{fig1} %% label for entire figure
    \vspace{-0.015\textheight}
\end{figure}

In Fig.\ref{fig1}, under different MIM coefficient $\omega$, we compare the codeword length bounds of Shannon's, UISC~\cite{taneja1985coding} and proposed I-W coding for Bernoulli sources. 
%Firstly, by \textbf{Remark}~\ref{rm3}, the bounds under MIM weighted measure goes close to Shannon's when $\omega$ is close to 0, as shown in Fig.\ref{f1}. Then, for small $\omega$ in Fig.\ref{f1} to \ref{f3}, the MIM weighted measure bounds are larger than Shannon's for all $p$, which implies that the I-W coding theorems bring more redundancy than Shannon's coding schemes to fit the renewed probability density caused by the importance weightings. Such redundancy can be used to guarantee that the symbols with more importance can be encoded more robustly by increasing the code lengths. Besides, the results are interesting when importance coefficient gets larger. Fig.\ref{f4} shows that for larger $\omega$, the MIM weighted measure bounds are lower than Shannon's because the I-W theorem reduces the codeword lengths of symbols with smaller probabilities but higher importance by balancing the importance-aware density. 
Firstly, as shown in Fig.\ref{f1}, while $\omega$ is set to be negative, MIM increases the weights of symbols with larger probability which leads to compression for both weighted coding theorems. Then, for small positive $\omega$ in Fig.\ref{f2} and \ref{f3}, MIM weighted measure bounds are larger than Shannon's for all $p$, which implies that the I-W and UISC coding theorems bring more redundancy than Shannon's coding schemes to fit the renewed probability density caused by the importance weightings. Such redundancy preserves enough codeword length for the symbols of users' interests. Fig.\ref{f4} shows that for larger $\omega$, MIM based I-W bounds are lower than Shannon's because it reduces the codeword lengths of symbols with smaller probabilities but higher importance by balancing the importance-aware density. Furthermore, under MIM weightings with large $\omega$, MIM based I-W coding theorem outperforms UISC on the convergency and stability of the weighted codeword length, as demonstrated in Fig.\ref{f3} and \ref{f4}. 
According to the above properties, MIM weighted coding theorem can be used to guide the compression encoding through a trade-off between probabilities and importance measures. In particular, for uniform distributed sources, neither compression nor redundancy occurs in I-W coding theorems because all of the weights are equal.

\section{Conclusion and Future Work}
In this letter, we proposed an importance-aware weighted codeword length and derived its coding theorems. We also discussed the specific form of the weightings and obtained the version based on MIM. Numerical results identified some differences with Shannon's source coding theorems and showed advantages compared to other weighted coding theorems. 

The I-W coding theorems introduced the concept of importance weightings to the codeword length and can be used in compression coding and in scenarios where the distribution of the source is imbalanced. For further work, specific coding schemes and algorithms need to be investigated based on the bounds developed in this letter.

\nocite{*}
\bibliographystyle{unsrt}
\bibliography{refs}

\begin{thebibliography}{10}

\bibitem{shannon1948mathematical}
Claude~Elwood Shannon.
\newblock A mathematical theory of communication.
\newblock {\em Bell system technical journal}, 27(3):379--423, 1948.

\bibitem{kraft1949device}
Leon~Gordon Kraft.
\newblock {\em A device for quantizing, grouping, and coding
  amplitude-modulated pulses}.
\newblock PhD thesis, Massachusetts Institute of Technology, 1949.

\bibitem{cover2012elements}
Thomas~M Cover and Joy~A Thomas.
\newblock {\em Elements of information theory}.
\newblock John Wiley \& Sons, 2012.

\bibitem{belis1968quantitative}
Mariana Belis and Silviu Guiasu.
\newblock A quantitative-qualitative measure of information in cybernetic
  systems (corresp.).
\newblock {\em IEEE Transactions on Information Theory}, 14(4):593--594, 1968.

\bibitem{bhat2016noiseless}
Ashiq~Hussain Bhat and Mirza Abdul~Khaliq Baig.
\newblock Noiseless coding theorems on new generalized useful information
  measure of order and type.
\newblock {\em Asian Journal of Fuzzy and Applied Mathematics (ISSN:
  2321--564X)}, 4(06), 2016.

\bibitem{taneja1985coding}
HC~Taneja, DS~Hooda, and RK~Tuteja.
\newblock Coding theorems on a generalized ‘useful’information.
\newblock {\em Soochow J. Math}, 11:123--131, 1985.

\bibitem{fan2016message}
Pingyi Fan, Yunquan Dong, Jiaxun Lu, and Shanyun Liu.
\newblock Message importance measure and its application to minority subset
  detection in big data.
\newblock In {\em 2016 IEEE Globecom Workshops (GC Wkshps)}, pages 1--5. IEEE,
  2016.

\bibitem{liu2018non}
Shanyun Liu, Rui She, Pingyi Fan, and Khaled~B Letaief.
\newblock Non-parametric message importance measure: Storage code design and
  transmission planning for big data.
\newblock {\em IEEE Transactions on Communications}, 66(11):5181--5196, 2018.

\bibitem{she2019importance}
Rui She, Shanyun Liu, Shuo Wan, Ke~Xiong, and Pingyi Fan.
\newblock Importance of small probability events in big data: Information
  measures, applications, and challenges.
\newblock {\em IEEE Access}, 7:100363--100382, 2019.

\bibitem{liu2019matching}
Shanyun Liu, Yunquan Dong, Pingyi Fan, Rui She, and Shuo Wan.
\newblock Matching users’ preference under target revenue constraints in data
  recommendation systems.
\newblock {\em Entropy}, 21(2):205, 2019.

\bibitem{she2017focusing}
Rui She, Shanyun Liu, Yunquan Dong, and Pingyi Fan.
\newblock Focusing on a probability element: Parameter selection of message
  importance measure in big data.
\newblock In {\em 2017 IEEE International Conference on Communications (ICC)},
  pages 1--6. IEEE, 2017.

\bibitem{liu2018switch}
Shanyun Liu, Rui She, Shuo Wan, Pingyi Fan, and Yunquan Dong.
\newblock A switch to the concern of user: Importance coefficient in utility
  distribution and message importance measure.
\newblock In {\em 2018 14th International Wireless Communications \& Mobile
  Computing Conference (IWCMC)}, pages 1362--1367. IEEE, 2018.

\end{thebibliography}

% that's all folks
\end{document}